\documentclass[sigconf,nonacm]{acmart}

\usepackage{microtype}
\AtBeginDocument{%
  }

\setcopyright{acmlicensed}
\copyrightyear{2026}
\acmYear{2026}
\acmDOI{XXXXXXX.XXXXXXX}

\renewcommand\footnotetextcopyrightpermission[1]{}
\acmConference{}{}{}
\acmBooktitle{}
\acmPrice{}
\acmDOI{}
\acmISBN{}

\acmISBN{978-1-4503-XXXX-X/2018/06}

\usepackage{microtype}
\usepackage{graphicx}
\usepackage{subcaption}
\usepackage{booktabs} 
\usepackage{hyperref}

\usepackage{booktabs}
\usepackage{multirow}

\usepackage{amsmath}
\usepackage{mathtools}
\usepackage{amsthm}
\usepackage[capitalize,noabbrev]{cleveref}

\theoremstyle{plain}
\newtheorem{theorem}{Theorem}[section]

\newtheorem{corollary}[theorem]{Corollary}
\theoremstyle{definition}

\newtheorem{assumption}[theorem]{Assumption}
\theoremstyle{remark}

\usepackage{algorithm}
\usepackage{algpseudocode}
\usepackage{bm} 
\usepackage[disable,textsize=tiny]{todonotes}

\usepackage{pifont}
\newcommand{\cmark}{\ding{51}}  
\newcommand{\xmark}{\ding{55}}  

\begin{document}

\title{Identifying the Group to Intervene on to Maximise Effect Under Cross-Group Interference}

\author{Xiaojing Du, Jiuyong Li, Lin Liu, Debo Cheng, Jixue Liu, Thuc Duy Le}
\affiliation{%
  \institution{Adelaide University}
  \city{Adelaide}
  \state{SA}
  \country{Australia}
}
\begin{abstract}
  In many networked systems, interventions applied to one group of units can induce substantial causal effects on another group through cross-group interference pathways. Despite its practical importance in domains such as public health, digital marketing, and social policy, the problem of identifying which intervention subset in a source group maximizes the benefit on a target group remains largely unaddressed. We formalize this problem as \emph{cross-group causal influence estimation} and introduce the \emph{core-to-group causal effect} (Co2G), a formally defined causal estimand that quantifies the contrast in target-group outcomes under intervention versus non-intervention on a candidate source subset. We establish the nonparametric identifiability of Co2G from observational network data using do-calculus under standard causal assumptions, and develop a graph neural network--based estimator that captures cross-group interference patterns. To navigate the combinatorial search space of candidate subsets, we propose \textit{CauMax}, an uncertainty-aware causal effect maximization framework with two scalable selection algorithms: (i)~\textit{CauMax-G}, an iterative greedy search with Monte Carlo dropout--based lower confidence bounds, and (ii)~\textit{CauMax-D}, a differentiable gradient-based optimization via Gumbel--Softmax relaxation. Extensive experiments on two real-world social networks demonstrate that \textit{CauMax} achieves an order-of-magnitude reduction in regret compared with structural heuristics and diffusion-based baselines, and that moderate uncertainty penalization consistently improves subset selection quality.
\end{abstract}

\begin{CCSXML}
<ccs2012>
   <concept>
       <concept_id>10010147.10010257.10010293.10010300</concept_id>
       <concept_desc>Computing methodologies~Learning in probabilistic graphical models</concept_desc>
       <concept_significance>500</concept_significance>
       </concept>
 </ccs2012>
\end{CCSXML}

\ccsdesc[500]{Computing methodologies~Learning in probabilistic graphical models}
\keywords{Causal Inference, Network Interference, Cross-Group Spillover Effects, Causal Effect Maximization, Subset Intervention}


\maketitle

\section{Introduction}

In many real-world systems, interventions applied to one group of entities can induce substantial causal effects on other distinct groups through complex interaction structures. Such phenomena arise broadly in social~\cite{yang2024your}, economic~\cite{han2023detecting}, biological~\cite{ma2023look}, and online systems~\cite{yuan2021causal}, where individuals or units are embedded in networks that mediate indirect effects across group boundaries. In these settings, an intervention targeting a subset of units may influence outcomes of other units that are not directly treated, giving rise to interference and violating the standard assumption that each unit’s outcome depends only on its own treatment assignment. Understanding and quantifying causal effects under interference has therefore become a central challenge in modern causal inference on networked data~\cite{imai2021causal, imbens2015causal,ogburn2024causal}.

\begin{figure}[t]
    \centering
    \includegraphics[scale=0.17]{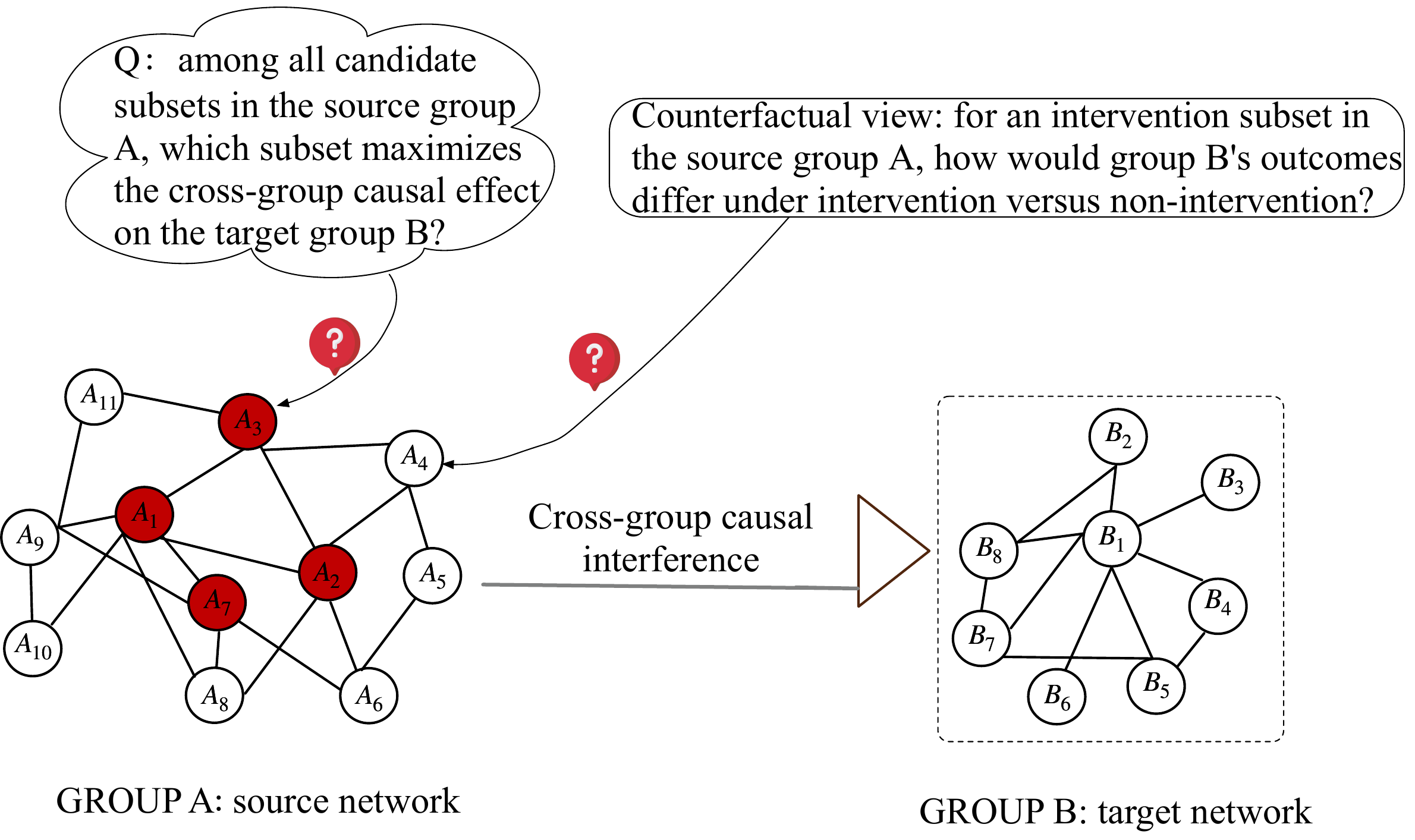}
\Description{A schematic diagram showing two groups of nodes, Group A (source) and Group B (target). Group A displays a complex network structure with interconnected nodes, where four nodes are highlighted in red as an intervention subset. An arrow labeled 'Cross-group causal interference' points from Group A to Group B. Group B is enclosed in a dashed box and also shows an internal network structure.}
    \caption{Illustration of the cross-group causal interference problem.
The source group $A$ forms a complex interaction structure,
and an intervention can be applied to any candidate subset.
The goal is to identify the subset that creates the maximal effect on target group $B$.
The causal mechanism operates through cross-group interference pathways from $A$ to $B$.
For a subset, the counterfactual perspective
compares the target group-level outcomes under intervention and non-intervention on a subset. Red nodes denote a candidate intervention subset.}
    \label{fig:cross_group}
\end{figure}

One prominent motivating domain where cross-group causal interference plays a crucial role is epidemiology. Epidemiological studies have long shown that the role of \textit{core transmitters} in the spread of infectious diseases far exceeds their proportion in the population \citep{anderson1991infectious, boily2002impact}. 
Although core groups, such as highly connected individuals or central clusters in epidemiological networks, represent only a small fraction of the population, their high contact rates and concentrated interactions make them the dominant sources of transmission. 
Consequently, disease spread is highly heterogeneous rather than uniform across populations, and interventions targeting highly connected regions can induce spillover effects that propagate widely through social contact networks \citep{schieber2023diffusion, cui2024information}. 
These findings highlight a fundamental methodological challenge: under network interference, how can we formally identify which core groups generate the largest causal influence, particularly when their own group produces cross-group spillovers? Understanding such cross-group causal influence is crucial for designing targeted interventions in real-world systems where social, behavioral, or biological interactions routinely span community boundaries.

\textbf{Motivating Example 1.}
Public vaccination programs offer a compelling illustration of how interventions 
targeted at one population segment can generate substantial causal effects on 
another. In many regions, elderly individuals or patients with severe chronic
conditions cannot be directly vaccinated, making their protection
dependent on interventions targeting other core sources of
transmission, such as highly connected working-age adults or densely
interacting social groups.
Vaccinating a particular core subgroup reduces its members’ infection
risk and, crucially, lowers disease transmission to elderly communities
that remain unvaccinated.
Different choices of core groups lead to markedly different target
benefits, as each subgroup interacts with the elderly through distinct
contact patterns and transmission pathways.
These indirect protective effects constitute a form of cross-group
causal influence, as illustrated in
Figure~\ref{fig:cross_group}.
The key challenge is that observed infection outcomes among the elderly
reflect only the intervention actually implemented, whereas policy
decisions require understanding how these outcomes would have differed
under alternative intervention choices and identifying the one that
maximizes the causal benefit on the target.

\textbf{Motivating Example 2.}
Digital product ecosystems often exhibit cross-group influence, where
interventions applied to one community reshape the behavior of another.
Consider a company launching a new AI-powered note-taking application
and seeking to increase adoption among students.
Because direct intervention on students is limited, the company instead
provides a free premium version to selected online content creators.
Treating these creators increases their own usage and encourages them to
feature the product in their content, indirectly exposing the target
student community and raising adoption without direct intervention. Importantly, creators differ substantially in their target impact:
only a small subset may strongly influence students’ adoption decisions.
However, the company observes outcomes only under the intervention
applied to a given set of creators.
Effective decision making therefore requires understanding how student
adoption would have differed had the intervention targeted alternative
creator subsets, a fundamentally counterfactual question involving
cross-group causal influence.

\begin{table}[t]
\centering
\caption{Representative works related to network interference and influence.}
\label{tab:related_work_summary}
\footnotesize 
\setlength{\tabcolsep}{3pt} 

\begin{tabular}{lcc}
\toprule
\textbf{Representative Works} & \textbf{Cross-Group} & \textbf{Causal Effect Max.} \\
\midrule
\citet{jiang2022estimating}       & \xmark & \xmark \\
\citet{cai2023generalization}     & \xmark & \xmark \\
\citet{aronow2017estimating}      & \xmark & \xmark \\
\citet{ogburn2024causal}          & \xmark & \xmark \\
\citet{miguel2004worms}           & \cmark & \xmark \\
\citet{heitzig2024spillover}      & \cmark & \xmark \\
\citet{ji2025within}              & \cmark & \xmark \\
\citet{possnig2022estimating}     & \xmark & \xmark \\
\citet{lee2023finding}            & \xmark & \xmark \\
\citet{su2023unveiling}           & \xmark & \cmark \\
\midrule
\textbf{Ours}                     & \cmark & \cmark \\
\bottomrule
\end{tabular}
\end{table}
These intertwined diffusion mechanisms highlight the limitations of existing causal inference frameworks under network interference. As summarized in Table~\ref{tab:related_work_summary},
most previous studies either confine spillovers within a single community \citep{jiang2022estimating, cai2023generalization, aronow2017estimating, ogburn2024causal} 
or fail to capture the asymmetric influence exerted by highly active core groups \citep{miguel2004worms, heitzig2024spillover, ji2025within, possnig2022estimating}. While recent work has begun to formalize influential individuals’ causal influence, such as Lee et al. \citep{lee2023finding} who propose a causal framework to identify influential subjects within a single network, their formulation restricts interference to occur within rather than across groups and does not characterize how interventions propagate between communities. In parallel, Su et al. \citep{su2023unveiling} investigate the environmental sensitivity of influence maximization gains, but their analysis is grounded in algorithmic diffusion models rather than causal estimates, and thus cannot distinguish marginal structural influence from genuine causal spillovers or capture how activation effects transmit across heterogeneous groups. Consequently, existing approaches cannot identify which core sources generate the largest cross-group causal influence, a capability essential for targeting interventions in realistic settings where communities interact, overlap, and exert asymmetric effects on one another.

While classical greedy influence-maximization (IM) methods aim to select seeds that maximize the expected cascade size under a pre-specified diffusion model \citep{kempe2003maximizing,chen2009efficient}, this perspective fundamentally differs from our objective. Greedy IM algorithms simulate how activation spreads under assumed diffusion dynamics, whereas we seek to determine how outcomes in group B would change under different choices of core sources to target in group A. This distinction is crucial: even perfectly predicting the observed diffusion cannot reveal how group B would have responded had an alternative subset in group A been selected, because such responses correspond to potential outcomes that are never observed in real data \citep{rubin2005causal,imbens2015causal}. Therefore, identifying the core source that maximizes cross-group improvements requires a causal formulation rather than a purely diffusion-based or algorithmic one. A causal perspective explicitly defines the estimand of interest, clarifies what is identifiable from observational data, and enables principled comparison across intervention choices that were not carried out.

However, identifying which source subset yields the greatest target improvement poses two fundamental challenges. \textbf{(a) Unbiased estimation under cross-group interference.}
Because treatments assigned to units in a subset in the source network $A$ spill over to units in $B$ through complex network pathways, the potential outcomes corresponding to alternative subsets cannot be observed simultaneously. Standard estimators conflate causal spillovers with correlations induced by network structure or confounding, making the underlying causal contrast difficult to identify from observational data. \textbf{(b) The combinatorial search space of intervention choices.}
The number of possible subsets grows exponentially with $|A|$, rendering exhaustive evaluation of all counterfactual outcomes infeasible. Classical influence-maximization heuristics are not directly applicable, as they optimize expected cascade sizes under assumed diffusion models rather than causal estimands defined over unobserved potential outcomes. Together, these challenges motivate the need for a framework that (i) identifies unbiased cross-group causal effects from observational network data and (ii) efficiently discovers the subset that maximizes target improvement.

Motivated by these observations, we formulate the problem of cross-group causal influence maximization, which seeks to identify the core sources whose selection for intervention yields the greatest causal improvements in the outcomes of other groups. Our main contributions are summarized as follows:
\begin{itemize}
    \item \textbf{Problem formulation.} We formalize the problem of cross-group causal influence maximization and define the core-to-group causal effect (Co2G), a causal estimand that quantifies the target-group outcome improvement attributable to intervening on a source subset under network interference.
    \item \textbf{Nonparametric identifiability.} We establish the identifiability of Co2G from observational network data using do-calculus under standard graphical causal assumptions, without imposing parametric restrictions on the outcome model.
\item \textbf{Estimation and optimization framework.} We propose
\textit{CauMax}, a causal effect maximization framework that integrates
a graph neural network-based estimator for cross-group interference
with an uncertainty-aware optimization objective based on Monte Carlo
dropout. We instantiate two scalable subset selection algorithms:
\textit{CauMax-G}, an iterative greedy search, and \textit{CauMax-D},
a differentiable gradient-based method via Gumbel--Softmax relaxation.
    \item \textbf{Empirical evaluation.} Extensive experiments on BlogCatalog and Flickr demonstrate that the proposed framework achieves an order-of-magnitude reduction in regret over structural and diffusion-based baselines, and that moderate uncertainty penalization consistently improves selection quality.
\end{itemize}

\begin{figure*}[t]
    \centering
    \includegraphics[scale=0.28]{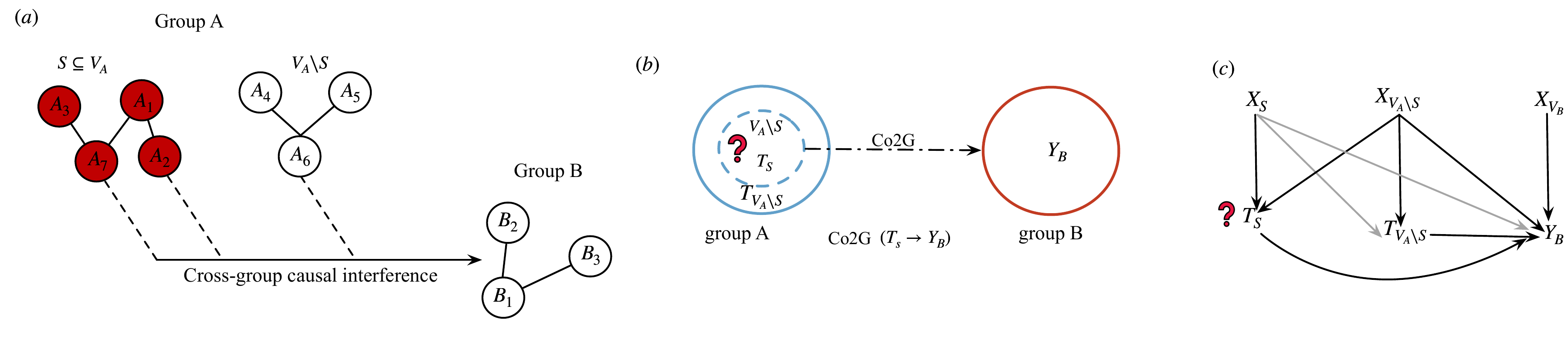}
    \Description{It shows a source group V_A with overlapping communities and a target group V_B.}
    \caption{Cross-group causal structure and network-to-causal abstraction.
(a) Network representation conditioned on a candidate subset $S \subseteq V_A$,
with source units partitioned into the intervened subset $S$ and remaining units $V_A \setminus S$.
(b) The core-to-group causal effect (Co2G), contrasting the target group outcome $Y_B$
under intervention versus non-intervention on $S$.
(c) Corresponding causal graph with subset-level variables:
$X_S, X_{V_A\setminus S}, X_{V_B}$ (pre-treatment covariates),
$T_S, T_{V_A\setminus S}$ (treatment assignments), and $Y_B$ (target outcomes).}
    \label{fig:co2g}
\end{figure*}

\section{Related Work}
Our work lies at the intersection of causal inference under network interference and influence maximization on social networks. We review each line of work below.
\subsection{Cross-Group Spillovers in Networked Systems}

Classical causal inference under interference has primarily operated under the \textit{partial interference} assumption~\citep{hudgens2008toward, sobel2006randomized}, where spillovers occur only within pre-defined clusters. Extensions to networked observational data~\citep{aronow2017estimating, forastiere2021identification, ogburn2024causal} allow more flexible interference structures but remain confined to within-group or neighborhood-level effects.

More recent work has examined spillovers that propagate \textit{across} distinct communities. \citet{miguel2004worms} provided early empirical evidence of cross-school treatment externalities in deworming programs. \citet{heitzig2024spillover} documented cross-village diffusion of savings groups, and \citet{ji2025within} revealed heterogeneous within- and cross-group spillover effects in influencer marketing. On the estimation side, \citet{jiang2022estimating} and \citet{cai2023generalization} developed representation learning-based estimators and generalization bounds for causal effects on networked data, though their formulations do not explicitly distinguish source and target groups or model asymmetric cross-group interference.
In contrast, our framework formalizes a directional cross-group causal structure where interventions on a source group propagate to a target group through inter-group network connections, and introduces a combinatorial optimization objective over causal estimands.

\subsection{Influence Maximization}

Influence maximization (IM) selects seed nodes to maximize expected cascade spread under diffusion models such as IC and LT~\citep{kempe2003maximizing, chen2009efficient, leskovec2007cost}. Recent work has begun to connect IM with causal reasoning: \citet{lee2023finding} proposed a causal framework for identifying influential subjects within a single network, while \citet{su2023unveiling} incorporated individual treatment effects into IM to capture environmental sensitivity of node-level gains.

However, IM fundamentally differs from our objective. IM algorithms optimize expected activation counts under assumed diffusion dynamics, whereas we optimize a formally defined causal estimand the core-to-group causal effect (Co2G) over unobserved potential outcomes. Even perfect diffusion prediction cannot reveal how the target group would have responded under alternative intervention subsets, as these correspond to counterfactual outcomes never observed in data~\citep{imbens2015causal, rubin2005causal}. Our framework bridges this gap by combining causal identification with subset optimization over estimated causal effects.

\section{Problem Setup and Cross-Group Interference}
\label{sec:pro}
\paragraph{Notation.}
Let $V_A$ denote the set of source-group nodes and $V_B$ denote the set of
target-group nodes.
A (candidate) intervention subset is denoted by $S \subseteq V_A$ with
a budget constraint $|S|\le K$.
We use $\mathcal{S}$ to denote a collection of evaluated subsets
(e.g., randomly sampled subsets of varying sizes) in empirical evaluation.
For a fixed size $k$, let $S_k$ denote the subset returned by an algorithm,
and let $S_k^\star$ denote the oracle optimal subset under the ground-truth
causal effect. 

\paragraph{Networked Groups and Interference Structure.}
We consider two disjoint groups of units: a \emph{source group} and a \emph{target group}, corresponding to node sets $V_A$ and $V_B$, respectively.
Edges may exist both within groups and across groups, forming an interaction
graph through which influences originating in the source group may propagate
to the target group via direct or indirect pathways.
The interference pattern is asymmetric: interventions applied to units in
group $A$ may affect outcomes in group $B$, but not vice versa.
Such directional cross-group influence patterns arise naturally in many
real-world settings, including vaccination and infectious disease control
\citep{miguel2004worms,jiang2022estimating},
influencer marketing and information diffusion
\citep{lee2023finding,ji2025within},
and core--periphery or hierarchical social systems
\citep{kempe2003maximizing,schieber2023diffusion}.

\paragraph{Subset-Level Intervention.}
We focus on deterministic joint interventions applied to a subset
$S\subseteq V_A$.
All units in $S$ are simultaneously assigned to the same treatment level,
while units in $V_A\setminus S$ are fixed at a baseline (control) level.
Throughout the paper, we assume a binary treatment with levels
$t\in\{0,1\}$ and fix the baseline level as $t_0=0$.
Our objective is to identify, among a combinatorial space of candidate subsets
$S\subseteq V_A$, the subset-level intervention that maximizes the causal
effect on the target group $B$.

\paragraph{Interventional Target-Group Mean.}
Let $Y_j$ denote the outcome of target node $j\in V_B$, and define the
average target-group outcome as
$
Y_B = \frac{1}{|V_B|}\sum_{j\in V_B} Y_j.
$
For any subset $S\subseteq V_A$ and treatment level $t\in\{0,1\}$,
we define the \emph{interventional target-group mean} as
\begin{equation}
\label{eq:interventional-mean}
\mu_B(t;S)
\;:=\;
\mathbb{E}\!\left[
Y_B \;\middle|\;
\mathrm{do}(T_S=t),\,
\mathrm{do}(T_{V_A\setminus S}=t_0)
\right],
\end{equation}
where $t_0=0$ denotes the baseline intervention level for the remaining
source nodes.

\paragraph{Core-to-Group Causal Effect.}
The \emph{core-to-group causal effect} (Co2G) of subset $S$ is defined as
the contrast between two interventional target-group means:
\begin{equation}
\label{eq:co2g}
\mathrm{Co2G}(S)
\;:=\;
\mu_B(1;S)-\mu_B(0;S).
\end{equation}

\paragraph{Cross-Group Causal Influence Maximization.}
Given an intervention budget $K$, the cross-group causal influence
maximization problem seeks to identify
\begin{equation}
\label{eq:objective}
S^\star
\;=\;
\arg\max_{S\subseteq V_A,\;|S|\le K}
\mathrm{Co2G}(S),
\end{equation}
that is, to select at most $K$ source nodes whose joint intervention yields
the largest causal improvement in the outcomes of the target group $B$.

We state the standard assumptions commonly adopted in causal inference
to support the identification results presented in the subsequent sections.
\begin{assumption}[Markov Property {\citep{pearl2009causality}}]
\label{ass:markov}
Let $\mathcal{G}$ denote the directed acyclic graph (DAG) corresponding
to the causal abstraction in Fig.~\ref{fig:co2g}.
The joint distribution over the observed variables satisfies the Markov
property with respect to $\mathcal{G}$, i.e., each variable is
independent of its non-descendants given its parents in $\mathcal{G}$.
\end{assumption}

\begin{assumption}[Faithfulness {\citep{spirtes2000causation}}]
\label{ass:faithfulness}
The observed data distribution is faithful to the causal graph
$\mathcal{G}$, meaning that all and only the conditional independence
relationships implied by $\mathcal{G}$ hold in the distribution.
\end{assumption}

\begin{assumption}[Causal Sufficiency {\citep{pearl2009causality}}]
\label{ass:sufficiency}
All common causes of the observed variables in the causal graph
$\mathcal{G}$ are observed.
In particular, there are no unobserved confounders jointly affecting
the treatment assignments in $V_A$ and the target outcomes in $V_B$
after conditioning on observed covariates.
\end{assumption}

\begin{assumption}[Positivity {\citep{pearl2009causality}}]
\label{ass:positivity}
For any subset $S \subseteq V_A$ with $|S|\le K$, any treatment level
$t\in\{0,1\}$, and any covariate realization $x$ with
$p(X=x)>0$, we assume
$
\Pr\!\big(
T_S=t,\; T_{V_A\setminus S}=t_0 \mid X=x
\big) > 0,
$
where $t_0=0$ denotes the baseline intervention level.
\end{assumption}

\section{Method}

In this section, we present the full framework for cross-group causal
effect estimation and optimization. We first establish the nonparametric
identifiability of the Co2G from observational data (Section~\ref{subsec:identifiability}),
then develop a learning-based estimator and optimization procedure consisting
of three components: (1) a graph neural network estimator that captures
cross-group interference patterns (Section~\ref{subsec:estimator}), (2) an
uncertainty-aware optimization objective that guards against out-of-distribution
predictions (Section~\ref{subsec:objective}), and (3) two scalable search
algorithms for subset selection (Section~\ref{subsec:algorithms}).

\subsection{Identifiability of the Cross-Group Causal Effect}
\label{subsec:identifiability}

We now establish the identifiability of the interventional target-group
mean $\mu_B(t;S)$ and the corresponding core-to-group causal effect
$\mathrm{Co2G}(S)$ defined in
Eqs.~\eqref{eq:interventional-mean}--\eqref{eq:co2g}
under the cross-group causal structure illustrated in
Fig.~\ref{fig:co2g}. Let
$
X = \big(X_{V_A},\; X_{V_B}\big)
$
denote the collection of observed pre-treatment covariates, where
$X_{V_A}$ denotes covariates of source-group nodes, and
$X_{V_B}$ denotes covariates of target-group nodes.
By construction, all variables in $X$ are measured prior to treatment
assignment and are not affected by any intervention.

Our goal is to identify, from observational data, the interventional
mean
$
\mu_B(t;S)
=
\mathbb{E}\!\left[
Y_B
\;\middle|\;
\mathrm{do}(T_S=t),\,
\mathrm{do}(T_{V_A\setminus S}=t_0)
\right],
$
for any subset $S\subseteq V_A$ with $|S|\le K$ and treatment level
$t\in\{0,1\}$, where $t_0=0$ denotes the baseline intervention level.

\begin{theorem}[Identifiability of the Interventional Target-Group Mean]
\label{thm:identifiability_mu}
Under Assumptions~\ref{ass:markov} to~\ref{ass:positivity} and the causal graph in Fig.~\ref{fig:co2g},
the interventional target-group mean $\mu_B(t;S)$ is identifiable from
the observational distribution $p(X,T_{V_A},Y_B)$ and admits the
following representation:
\begin{equation}
\label{eq:identification_mu}
\mu_B(t;S)
=
\sum_{x}
\mathbb{E}\!\left[
Y_B
\mid
T_S=t,\;
T_{V_A\setminus S}=t_0,\;
X=x
\right]
\, p(X=x).
\end{equation}
\end{theorem}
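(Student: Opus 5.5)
The plan is the standard back-door argument, carried out in the do-calculus on the subset-level graph $\mathcal{G}$ of Fig.~\ref{fig:co2g}. Write $T := (T_S, T_{V_A\setminus S})$ for the full source treatment vector. Setting $T_S=t$ and $T_{V_A\setminus S}=t_0$ is then a single joint intervention $\mathrm{do}(T=\tau)$ with $\tau=(t,t_0)$. In $\mathcal{G}$ the covariates $X=(X_S,X_{V_A\setminus S},X_{V_B})$ are parents of the treatments and of $Y_B$. The treatments are parents of $Y_B$. By Assumption~\ref{ass:sufficiency}, there are no latent common causes of $T$ and $Y_B$ beyond $X$.

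\textbf{Step 1 (adjustment by the law of total probability in the intervened model).} I will write
\[
\mu_B(t;S)=\sum_x \mathbb{E}[Y_B\mid \mathrm{do}(T=\tau), X=x]\,p(X=x\mid \mathrm{do}(T=\tau)).
\]
Then I will replace $p(x\mid \mathrm{do}(T=\tau))$ by $p(x)$ using Rule~3 of the do-calculus. $X$ consists of pre-treatment variables and so is a non-descendant of $T$. In the mutilated graph $\mathcal{G}_{\overline{T}}$ the arrows into $T$ are removed, so $X$ is d-separated from $T$ and $(X\perp\!\!\!\perp T)_{\mathcal{G}_{\overline{T}}}$ holds.

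\textbf{Step 2 (action/observation exchange).} I will apply Rule~2 to replace $\mathrm{do}(T=\tau)$ by the observation $T=\tau$ inside the conditional expectation. This requires $(Y_B\perp\!\!\!\perp T\mid X)_{\mathcal{G}_{\underline{T}}}$, i.e.\ that $X$ blocks every back-door path from $T$ to $Y_B$. I expect this to be the only real step. In $\mathcal{G}_{\underline{T}}$, every remaining path from $T$ to $Y_B$ starts with an arrow into some component of $T$. Under causal sufficiency, such an arrow must come from a covariate in $X$, which is a non-collider on that path. Conditioning on $X$ therefore blocks the path.

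The delicate points in this step are twofold. First, $X$ must contain no descendants of $T$, which holds because all of $X$ is pre-treatment by construction. Second, there must be no paths between components of $T$ that avoid $X$, such as an edge $T_S\to T_{V_A\setminus S}$. This is harmless because the whole vector $T$ is intervened on jointly, so any such edges disappear in the mutilated graph. Assumption~\ref{ass:markov} is what licenses reading these d-separations as conditional independences in $p$.

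\textbf{Step 3 (well-definedness).} Combining Steps 1 and 2 yields Eq.~\eqref{eq:identification_mu}, with $T_S=t$ and $T_{V_A\setminus S}=t_0$ written out componentwise. Assumption~\ref{ass:positivity} guarantees that $\Pr(T_S=t, T_{V_A\setminus S}=t_0\mid X=x)>0$ whenever $p(x)>0$. Hence each conditional expectation in the sum is defined from the observational distribution $p(X,T_{V_A},Y_B)$, and identifiability follows.

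Finally, because $Y_B$ is the average of the $Y_j$, linearity lets the same argument apply node-wise if desired. The identifiability of $\mathrm{Co2G}(S)$ is then immediate by differencing the formula at $t=1$ and $t=0$. Faithfulness (Assumption~\ref{ass:faithfulness}) is not needed for the identification direction; I would note that it is only used elsewhere.
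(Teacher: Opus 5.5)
Your proposal is correct and follows the paper's route: total probability over $X$, Rule~3 to replace $p(x\mid\mathrm{do}(\cdot))$ by $p(x)$, and Rule~2 to exchange the intervention for observation in the outcome term. The differences are minor. You treat $(T_S,T_{V_A\setminus S})$ as one joint intervention, so each rule is applied once rather than twice in sequence as in the paper. You also make explicit the role of positivity in well-definedness, which the paper leaves implicit.
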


\begin{proof}
Starting from the definition of the interventional distribution, $P(y_B \mid \mathrm{do}(T_S=t), \mathrm{do}(T_{V_A\setminus S}=t_0))$. By the law of total probability, we have $P(y_B \mid \mathrm{do}(T_S=t),\mathrm{do}(T_{V_A\setminus S}=t_0)) = \sum_x P(y_B \mid X=x,\mathrm{do}(T_S=t),\mathrm{do}(T_{V_A\setminus S}=t_0)) \times P(X=x \mid \mathrm{do}(T_S=t),\mathrm{do}(T_{V_A\setminus S}=t_0))$. Based on Rule~3 of do-calculus, we have $P(X=x \mid \mathrm{do}(T_S=t),\mathrm{do}(T_{V_A\setminus S}=t_0)) = P(X=x \mid \mathrm{do}(T_S=t))$, since $X \perp\!\!\!\perp T_{V_A\setminus S} \mid T_S$ in $\mathcal{G}_{\overline{T_S},\,\overline{T_{V_A\setminus S}}}$, where $\overline{T}$ denotes the removal of all incoming edges into $T$. Applying Rule~3 again yields $P(X=x \mid \mathrm{do}(T_S=t)) = P(X=x)$, because $X \perp\!\!\!\perp T_S$ in $\mathcal{G}_{\overline{T_S}}$.

Next, by Rule~2 of do-calculus, we have $P(y_B \mid X=x,\mathrm{do}(T_S=t),\mathrm{do}(T_{V_A\setminus S}=t_0)) = P(y_B \mid X=x,\mathrm{do}(T_S=t),T_{V_A\setminus S}=t_0)$ because $Y_B \perp\!\!\!\perp T_{V_A\setminus S} \mid (T_S,X)$ in $\mathcal{G}_{\overline{T_S},\,\underline{T_{V_A\setminus S}}}$, where $\underline{T}$ denotes the removal of all outgoing edges from $T$. Similarly, $P(y_B \mid X=x,\mathrm{do}(T_S=t),T_{V_A\setminus S}=t_0) = P(y_B \mid X=x,T_S=t,T_{V_A\setminus S}=t_0)$, because $Y_B \perp\!\!\!\perp T_S \mid (T_{V_A\setminus S},X)$ in $\mathcal{G}_{\underline{T_S}}$. Combining the above results, we obtain the identification formula $P(y_B \mid \mathrm{do}(T_S=t),\mathrm{do}(T_{V_A\setminus S}=t_0)) = \sum_x P(y_B \mid X=x,T_S=t,T_{V_A\setminus S}=t_0)\,P(X=x)$.
\end{proof}

\begin{corollary}[Identifiability of the Core-to-Group Causal Effect]
\label{cor:identifiability_co2g}
For any subset $S\subseteq V_A$ with $|S|\le K$, the core-to-group causal
effect
$
\mathrm{Co2G}(S)
=
\mu_B(1;S)-\mu_B(0;S)
$
is identifiable and can be expressed as

\begin{equation}
\label{eq:identification_co2g}
\begin{aligned}
\mathrm{Co2G}(S)
=
\sum_x
\Big(
&\mathbb{E}[Y_B \mid T_S=1,T_{V_A\setminus S}=t_0,X=x] \\
&\;-\;
\mathbb{E}[Y_B \mid T_S=0,T_{V_A\setminus S}=t_0,X=x]
\Big)
\,p(X=x).
\end{aligned}
\end{equation}

\end{corollary}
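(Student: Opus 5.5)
The plan is to derive the corollary directly from Theorem~\ref{thm:identifiability_mu}, applied twice with the same subset $S$ and two treatment levels. By the definition in Eq.~\eqref{eq:co2g}, $\mathrm{Co2G}(S)=\mu_B(1;S)-\mu_B(0;S)$. Both terms are interventional target-group means of the form in Eq.~\eqref{eq:interventional-mean}, for the admissible pair $(S,t)$ with $|S|\le K$ and $t\in\{0,1\}$. Theorem~\ref{thm:identifiability_mu} therefore applies to each term separately.

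\textbf{Step 1.} Invoke Theorem~\ref{thm:identifiability_mu} with $t=1$. This gives $\mu_B(1;S)=\sum_x \mathbb{E}[Y_B\mid T_S=1,T_{V_A\setminus S}=t_0,X=x]\,p(X=x)$.

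\textbf{Step 2.} Invoke it again with $t=0$. This gives the analogous sum with $T_S=0$. Since $t_0=0$, the conditioning event is then $T_{V_A}=0$.

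\textbf{Step 3.} Subtract the two expressions. Both sums range over the same covariate values $x$ with the same weights $p(X=x)$. The key point is that $p(X=x)$ does not depend on the treatment level, which holds because $X$ is pre-treatment. Linearity of summation then lets the difference be written as a single sum of conditional-mean contrasts weighted by $p(X=x)$, which is exactly Eq.~\eqref{eq:identification_co2g}.

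\textbf{Step 4.} Conclude identifiability. Each conditional expectation is a functional of the observational distribution $p(X,T_{V_A},Y_B)$. Each is well defined because Assumption~\ref{ass:positivity} ensures that the conditioning events $\{T_S=t,T_{V_A\setminus S}=t_0,X=x\}$ have positive probability for both $t=0$ and $t=1$. Hence $\mathrm{Co2G}(S)$ is a functional of the observational law and is identifiable.

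The argument has essentially no obstacle, since the substantive do-calculus work is already done in Theorem~\ref{thm:identifiability_mu}. The only point needing care is the well-definedness in Step~4. Positivity must hold for both treatment levels at every $x$ in the support of $X$, so that the two sums share the same index set and can be combined term by term. Assumption~\ref{ass:positivity} guarantees exactly this. If $X$ is continuous, the sums should be read as integrals against the distribution of $X$, and the same linearity argument applies, assuming the conditional means are integrable (e.g. $Y_B$ bounded or $\mathbb{E}|Y_B|<\infty$).
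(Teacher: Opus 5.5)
Your proposal is correct and matches the paper's implicit argument: the paper gives no separate proof of the corollary, and it follows by applying Theorem~\ref{thm:identifiability_mu} at $t=1$ and $t=0$ and subtracting, with Assumption~\ref{ass:positivity} keeping the conditional expectations well defined. Your remark that the weights agree ``because $X$ is pre-treatment'' is harmless but not needed at this stage: the theorem's formula already weights by the observational marginal $p(X=x)$, which does not involve $t$, and the pre-treatment property was already used in the theorem's Rule~3 steps.
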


\subsection{Causal Effect Estimator}
\label{subsec:estimator}

As established in section~\ref{subsec:identifiability}, the interventional target-group mean
$\mu_B(t;S)$ and the core-to-group causal effect $\mathrm{Co2G}(S)$
are identifiable from observational data under  Assumptions~\ref{ass:markov} to~\ref{ass:positivity}.
We now introduce a parameterized estimator designed to approximate
these identified causal quantities from networked observational data.

\paragraph{Estimator Overview.}
Let $f_\theta$ denote a learned causal effect estimator.
Given a subset $S\subseteq V_A$ and a treatment level $t\in\{0,1\}$,
we represent the corresponding joint intervention
$\mathrm{do}(T_S=t,\,T_{V_A\setminus S}=t_0)$
by a binary treatment vector
$T(S,t)\in\{0,1\}^{|V_A|}$, where
\begin{equation}
T_i(S,t)=
\begin{cases}
t, & i\in S,\\
t_0, & i\in V_A\setminus S,
\end{cases}
\qquad t_0=0.
\end{equation}

The estimator $f_\theta$ takes as input the network structure,
node covariates, and the subset-induced treatment vector $T(S,t)$,
and outputs a prediction of the corresponding interventional
target-group mean:
\begin{equation}
\widehat{\mu}_B(t;S)
=
f_\theta\!\big(T(S,t)\big).
\end{equation}

The estimated core-to-group causal effect is then given by
\begin{equation}
\widehat{\mathrm{Co2G}}(S)
=
\widehat{\mu}_B(1;S)-\widehat{\mu}_B(0;S).
\end{equation}

\paragraph{Model Architecture.}
The estimator $f_\theta$ is implemented as a graph-based neural
architecture that explicitly models how subset-level interventions on
the source group propagate to outcomes in the target group.
The architecture consists of three components. \textbf{(i) Source-group encoder.}
For each source node $i\in V_A$, we construct an input representation by
concatenating its observed covariates $X_i$ with the subset-induced
treatment indicator $T_i(S,t)$.
A graph neural network (GNN)~\cite{wu2020comprehensive} is then applied to the induced subgraph of
$V_A$ to produce context-aware node embeddings that capture both
within-group dependencies and the direct effect of the intervention. \textbf{(ii) Cross-group interference aggregation.}
To model cross-group spillovers, we propagate information from source
nodes to target nodes along cross-group edges. For each target node $j\in V_B$, we aggregate the representations of its
\emph{source-group neighbors} in $V_A$ along cross-group edges.
Specifically, let $N_A(j)=\{i\in V_A:(i,j)\in E_{AB}\}$ denote the set of
source nodes connected to $j$.
A permutation-invariant aggregation operator is applied over
$\{h_i:i\in N_A(j)\}$ to produce an interference representation $m_j$,
which summarizes the spillover effect from the source group to node $j$.
In our implementation, we use mean pooling for aggregation, i.e.,
$m_j=\frac{1}{|N_A(j)|}\sum_{i\in N_A(j)} h_i$. \textbf{(iii) Target outcome prediction.}
The aggregated interference signal is combined with the covariates of
target node $j$ and passed through a prediction network to obtain a
predicted outcome $\widehat{Y}_j(t;S)$.
The estimated interventional target-group mean is then computed as
\begin{equation}
\widehat{\mu}_B(t;S)
=
\frac{1}{|V_B|}
\sum_{j\in V_B}
\widehat{Y}_j(t;S),
\end{equation}
which serves as the model-based approximation of
$\mu_B(t;S)$.

\subsection{Uncertainty-Aware Optimization Objective}
\label{subsec:objective}

Given the estimator $\widehat{\mathrm{Co2G}}(S)$, a natural approach
would be to select the subset $S$ that maximizes the predicted causal
effect. However, since $f_\theta$ is learned from finite observational
data, predictions for subsets that are far from the support of the
observed data may suffer from high epistemic uncertainty.

To account for this uncertainty, we adopt a risk-averse optimization
criterion based on a lower confidence bound (LCB).
For a candidate subset $S\subseteq V_A$, we perform $M$ stochastic
forward passes using Monte Carlo dropout~\cite{gal2016dropout} and obtain samples
$\{\widehat{\mathrm{Co2G}}^{(m)}(S)\}_{m=1}^M$.
We compute the empirical mean and standard deviation as follows:
\begin{align}
\widehat{\mu}_{\mathrm{Co2G}}(S)
&=
\frac{1}{M}\sum_{m=1}^M
\widehat{\mathrm{Co2G}}^{(m)}(S), \\
\widehat{\sigma}_{\mathrm{Co2G}}(S)
&=
\sqrt{
\frac{1}{M}\sum_{m=1}^M
\left(
\widehat{\mathrm{Co2G}}^{(m)}(S)
-
\widehat{\mu}_{\mathrm{Co2G}}(S)
\right)^2
}.
\end{align}

The uncertainty-aware objective is defined as
\begin{equation}
\label{eq:lcb_objective}
J(S)
=
\widehat{\mu}_{\mathrm{Co2G}}(S)
-
\lambda\,
\widehat{\sigma}_{\mathrm{Co2G}}(S),
\end{equation}
where $\lambda\ge 0$ controls the strength of uncertainty penalization.
Maximizing $J(S)$ favors subsets with large predicted causal effects
while discouraging selections associated with high uncertainty.

\subsection{Subset Selection Algorithms}
\label{subsec:algorithms}

\paragraph{Greedy Search.}
For small to medium-sized networks, we adopt an iterative greedy
procedure to approximately solve
$\max_{S \subseteq V_A} J(S)$
subject to $|S| \le K$. Starting from the empty set $S_0=\emptyset$, at iteration $k$ we select
the node that yields the largest marginal improvement in the
uncertainty-aware objective:
\begin{equation}
\label{eq:u_objective}
v^\star
=
\arg\max_{v\in V_A\setminus S_{k-1}}
\big[
J(S_{k-1}\cup\{v\})-J(S_{k-1})
\big].
\end{equation}
If the maximum marginal gain is positive, we update
$S_k=S_{k-1}\cup\{v^\star\}$; otherwise, the procedure terminates early.
The algorithm stops once $|S_k|=K$ or no further improvement is
possible. The full procedure is summarized in Algorithm~\ref{alg:greedy}.

\paragraph{Differentiable Subset Optimization.}
For large-scale source groups, the greedy procedure in Algorithm~1
becomes computationally expensive due to repeated evaluations of
$J(S)$.
To address this limitation, we introduce a differentiable relaxation
that enables gradient-based optimization of the same uncertainty-aware
objective.

Instead of directly optimizing over discrete subsets, we relax the
subset indicator to a continuous vector and optimize a smooth surrogate
of $J(S)$ while enforcing the budget constraint.
After optimization, a discrete subset is recovered via a top-$K$
projection.
The resulting procedure is summarized in Algorithm~\ref{alg:dgs}.

Each source node $i\in V_A$ is associated with a learnable logit
$\psi_i$.
A relaxed treatment vector $\widetilde{T}\in[0,1]^{|V_A|}$ is generated
via a Gumbel--Softmax reparameterization~\cite{jang2016categorical}, which can be interpreted as
a soft approximation of a subset-induced intervention.
The estimator $f_\theta$ is kept fixed, and the logits $\psi$ are
optimized to maximize the uncertainty-aware objective under a budget
constraint:

\begin{equation}
\min_{\psi} \; -\mathbb{E}_{g \sim \text{Gumbel}}[J(\tilde{T}(\psi, g))] + \gamma \left(\|\tilde{T}(\psi, g)\|_1 - K\right)^2,
\end{equation}
where $\tilde{T}(\psi, g) = \sigma((\psi + g)/\tau)$ is the Gumbel--Softmax relaxation and the expectation is taken over the Gumbel noise $g$. After optimization, a discrete subset $S$ is obtained by selecting the
top-$K$ nodes according to $\psi$.
This approach scales linearly with $|V_A|$ and is suitable for
large-scale networks.

\begin{algorithm}[t]
\caption{Uncertainty-Aware Greedy Search (\textit{CauMax-G})}
\label{alg:greedy}
\begin{algorithmic}[1]
\Require Trained estimator $f_\theta$; source nodes $V_A$; budget $K$; MC passes $M$; uncertainty penalty $\lambda$
\Ensure Selected subset $S$
\State $S \gets \emptyset$
\State Estimate $J(S)=\widehat{\mu}_{\mathrm{Co2G}}(S)-\lambda\,\widehat{\sigma}_{\mathrm{Co2G}}(S)$ using $M$ MC-dropout passes
\For{$k = 1$ to $K$}
    \State $\Delta^\star \gets -\infty$, $v^\star \gets \text{None}$
    \ForAll{$v \in V_A \setminus S$}
        \State Estimate $J(S \cup \{v\})$ using $M$ MC-dropout passes
        \State $\Delta \gets J(S \cup \{v\}) - J(S)$
        \If{$\Delta > \Delta^\star$}
            \State $\Delta^\star \gets \Delta$
            \State $v^\star \gets v$
        \EndIf
    \EndFor
    \If{$\Delta^\star > 0$}
        \State $S \gets S \cup \{v^\star\}$
        \State Update cached $J(S)$
    \Else
        \State \textbf{break}
    \EndIf
\EndFor
\State \Return $S$
\end{algorithmic}
\end{algorithm}

\begin{algorithm}[t]
\caption{Differentiable Gradient Search (\textit{CauMax-D})}
\label{alg:dgs}
\begin{algorithmic}[1]
\Require Effect estimator $f_\theta$; source nodes $V_A$; budget $K$; iterations $M_{\mathrm{iter}}$; learning rate $\eta$; MC passes $M$; uncertainty penalty $\lambda$; regularization $\gamma$
\Ensure Selected subset $S_K$
\State Initialize $\psi \sim \mathcal{N}(0, 0.01 I)$
\For{$m = 1$ to $M_{\mathrm{iter}}$}
    \State Sample $g \sim \mathrm{Gumbel}(0,1)^{|V_A|}$
    \State Generate soft mask $\widetilde{T} \gets \sigma\!\left((\psi + g)/\tau\right)$
    \Comment{$\widetilde{T} \in [0,1]^{|V_A|}$ is a continuous relaxation of the subset indicator}
    \State Estimate $\widehat{\mu}_{\mathrm{Co2G}}(\widetilde{T})$ and $\widehat{\sigma}_{\mathrm{Co2G}}(\widetilde{T})$ using $M$ MC-dropout passes
    \State $J(\widetilde{T}) \gets \widehat{\mu}_{\mathrm{Co2G}}(\widetilde{T}) - \lambda\,\widehat{\sigma}_{\mathrm{Co2G}}(\widetilde{T})$
    \State $\mathcal{L} \gets -J(\widetilde{T}) + \gamma\left(\|\widetilde{T}\|_1 - K\right)^2$
    \State $\psi \gets \psi - \eta \nabla_{\psi} \mathcal{L}$
\EndFor
\State $S_K \gets \textsc{TopK\_Indices}(\psi, K)$
\State \Return $S_K$
\end{algorithmic}
\end{algorithm}

\section{Experiments}
\label{sec:experiments}

We evaluate the proposed framework on two real-world social networks, BlogCatalog (BC) and Flickr. Our experiments address three questions: (i) whether the proposed estimator accurately recovers cross-group causal effects, (ii) whether uncertainty-aware selection reduces regret over heuristic baselines, and (iii) how the two optimization strategies compare across network sizes.

\subsection{Baselines}

We compare against: \textit{Random}, which selects $K$ source nodes uniformly at random \citep{kempe2003maximizing};
\textit{Degree}, which selects the $K$ highest-degree source nodes \citep{kenett2015networks}; and
\textit{Influence Maximization (IM)}, which selects nodes to maximize diffusion-based cascade spread without considering causal effects \citep{leskovec2007cost}. We evaluate two variants of CauMax: \textit{CauMax-G} (Algorithm~\ref{alg:greedy}), an iterative greedy search, and \textit{CauMax-D} (Algorithm~\ref{alg:dgs}), a differentiable gradient-based method with Top-$K$ projection.

\subsection{Evaluation Metrics}

\paragraph{Approximate Oracle Subset ($S_k^\star$).}
To assess the quality of a selected subset, we construct an approximate
oracle solution using access to the ground-truth causal effect.
For a fixed budget $k$, the oracle subset is defined as
$
S_k^\star
=
\arg\max_{S\subseteq V_A,\;|S|=k}
\mathrm{Co2G}(S),
$
where $\mathrm{Co2G}(S)$ denotes the true core-to-group causal effect
computed from the data-generating process. Since exhaustive enumeration over all $\binom{|V_A|}{k}$ subsets is
computationally infeasible, we approximate $S_k^\star$ using a greedy
procedure that incrementally adds the source node with the largest
marginal increase in the true $\mathrm{Co2G}(\cdot)$.
This procedure, referred to as \emph{Oracle-Greedy}, serves as a
computationally tractable proxy for the true oracle and is used only for
evaluation purposes.

\paragraph{Regret@k.}
Given a method that returns a subset $S_k$ of size $k$, we define the
regret as
$
\mathrm{Regret@}k
=
\mathrm{Co2G}(S_k^\star)-\mathrm{Co2G}(S_k),
$
which measures the performance gap between the selected subset and the
approximate oracle.
Lower regret indicates that the selected subset achieves a causal
effect closer to the oracle solution.

\paragraph{Estimation Error.}
In addition to subset-level performance, we evaluate the accuracy of
the causal effect estimator by computing the root mean squared error
(RMSE) between the estimated and true core-to-group causal effects over a
collection of evaluated subsets $\mathcal{S}$:
$
\mathrm{RMSE}
=
\sqrt{
\frac{1}{|\mathcal{S}|}
\sum_{S\in\mathcal{S}}
\big(
\widehat{\mathrm{Co2G}}(S)-\mathrm{Co2G}(S)
\big)^2
}.
$

\begin{table*}[t]
  \centering
  \caption{Comparison of Regret@$K$ and RMSE across methods on BC and Flickr ($\lambda=0.5$ for CauMax-D and CauMax-G). Bold indicates the best result; underline indicates the second best.}
  \label{tab:main_results}
  \small
  \renewcommand{\arraystretch}{1.1}
  \setlength{\tabcolsep}{3.2pt}
  \begin{tabular}{ll cccccc cccccc}
    \toprule
    & & \multicolumn{6}{c}{Regret@$K$ $(\downarrow)$} & \multicolumn{6}{c}{RMSE $(\downarrow)$} \\
    \cmidrule(lr){3-8} \cmidrule(lr){9-14}
    \textbf{Method} & \textbf{Dataset}  & $k$=5 & $k$=10 & $k$=15 & $k$=20 & $k$=30 & $k$=50 & $k$=5 & $k$=10 & $k$=15 & $k$=20 & $k$=30 & $k$=50 \\
    \midrule
    \multirow{2}{*}{CauMax-D}
      & BC     & \textbf{0.0011} & \textbf{0.0015} & \textbf{0.0034} & \textbf{0.0030} & \textbf{0.0046} & \textbf{0.0069} & \textbf{0.0047} & \underline{0.0059} & \textbf{0.0057} & \textbf{0.0058} & \textbf{0.0071} & \textbf{0.0081} \\
      & Flickr & \textbf{0.0011} & \textbf{0.0021} & \textbf{0.0025} & \textbf{0.0031} & \textbf{0.0050} & \textbf{0.0065} & \textbf{0.0046} & \textbf{0.0050} & \textbf{0.0050} & \textbf{0.0057} & \textbf{0.0064} & \textbf{0.0075} \\
    \midrule
    \multirow{2}{*}{CauMax-G}
      & BC     & \underline{0.0051} & \underline{0.0087} & \underline{0.0118} & \underline{0.0147} & \underline{0.0195} & \underline{0.0301} & \underline{0.0058} & \textbf{0.0054} & \underline{0.0062} & \underline{0.0073} & \underline{0.0085} & \underline{0.0103} \\
      & Flickr & \underline{0.0043} & \underline{0.0079} & \underline{0.0106} & \underline{0.0127} & \underline{0.0184} & \underline{0.0278} & \underline{0.0061} & \underline{0.0060} & \underline{0.0058} & \underline{0.0070} & \underline{0.0072} & \underline{0.0097} \\
    \midrule
    \multirow{2}{*}{Degree}
      & BC     & 0.0111 & 0.0204 & 0.0280 & 0.0338 & 0.0483 & 0.0657 & 0.0096 & 0.0102 & 0.0104 & 0.0115 & 0.0132 & 0.0169 \\
      & Flickr & 0.0101 & 0.0158 & 0.0235 & 0.0265 & 0.0415 & 0.0588 & 0.0097 & 0.0094 & 0.0110 & 0.0114 & 0.0130 & 0.0163 \\
    \midrule
    \multirow{2}{*}{IM}
      & BC     & 0.0201 & 0.0341 & 0.0427 & 0.0576 & 0.0799 & 0.1218 & 0.0137 & 0.0134 & 0.0157 & 0.0163 & 0.0181 & 0.0229 \\
      & Flickr & 0.0162 & 0.0293 & 0.0388 & 0.0474 & 0.0732 & 0.0999 & 0.0126 & 0.0132 & 0.0141 & 0.0151 & 0.0175 & 0.0215 \\
    \midrule
    \multirow{2}{*}{Random}
      & BC     & 0.0274 & 0.0449 & 0.0646 & 0.0783 & 0.1130 & 0.1622 & 0.0189 & 0.0200 & 0.0199 & 0.0216 & 0.0246 & 0.0301 \\
      & Flickr & 0.0228 & 0.0378 & 0.0558 & 0.0711 & 0.0941 & 0.1373 & 0.0172 & 0.0189 & 0.0194 & 0.0218 & 0.0228 & 0.0300 \\
    \bottomrule
  \end{tabular}
\end{table*}

\begin{figure}[t]
  \centering
  \Description{Two stacked bar charts showing the sensitivity of Regret at K to the uncertainty penalty $\lambda$. The top chart (a) displays results for the Greedy search algorithm, and the bottom chart (b) displays results for the Differentiable gradient search. The x-axis groups data by $\lambda$ values (0, 0.25, 0.5, 1.0, 2.0). Solid bars represent the BlogCatalog dataset, and hatched bars represent the Flickr dataset.}
  \begin{subfigure}{\linewidth}
    \centering
    \Description{Two stacked bar charts illustrating the sensitivity of Regret@K to the uncertainty penalty $\lambda$. The top plot corresponds to CauMax-G and the bottom plot to CauMax-D. Each subplot shows six panels for budget levels $K \in \{5,10,15,20,30,50\}$, with grouped bars for $\lambda \in \{0, 0.25, 0.5, 1.0, 2.0\}$. Solid bars represent BlogCatalog and hatched bars represent Flickr.}
    \includegraphics[scale=0.13]{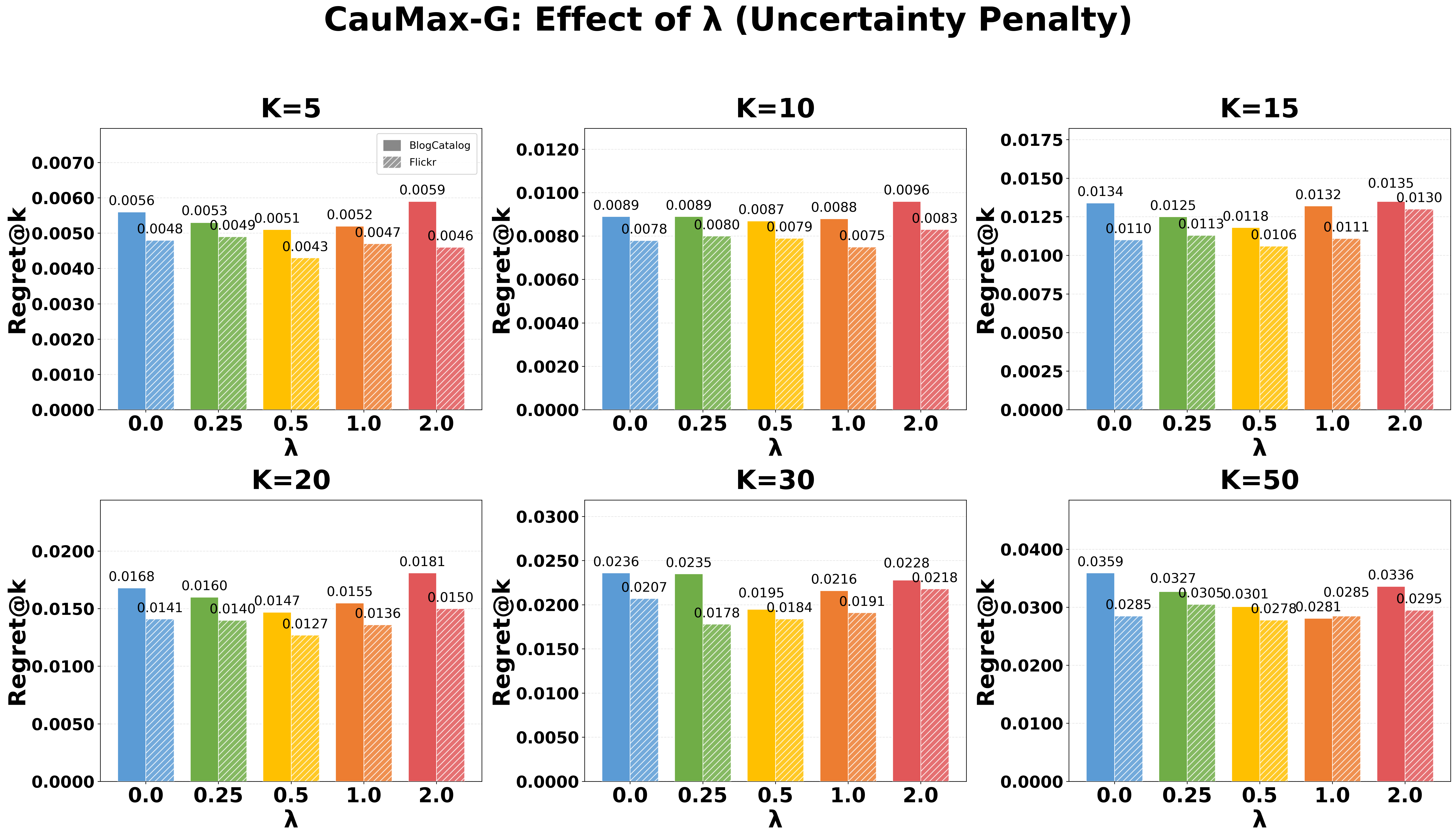}
    \label{fig:lambda_greedy}
  \end{subfigure}
  \begin{subfigure}{\linewidth}
    \centering
    \includegraphics[scale=0.13]{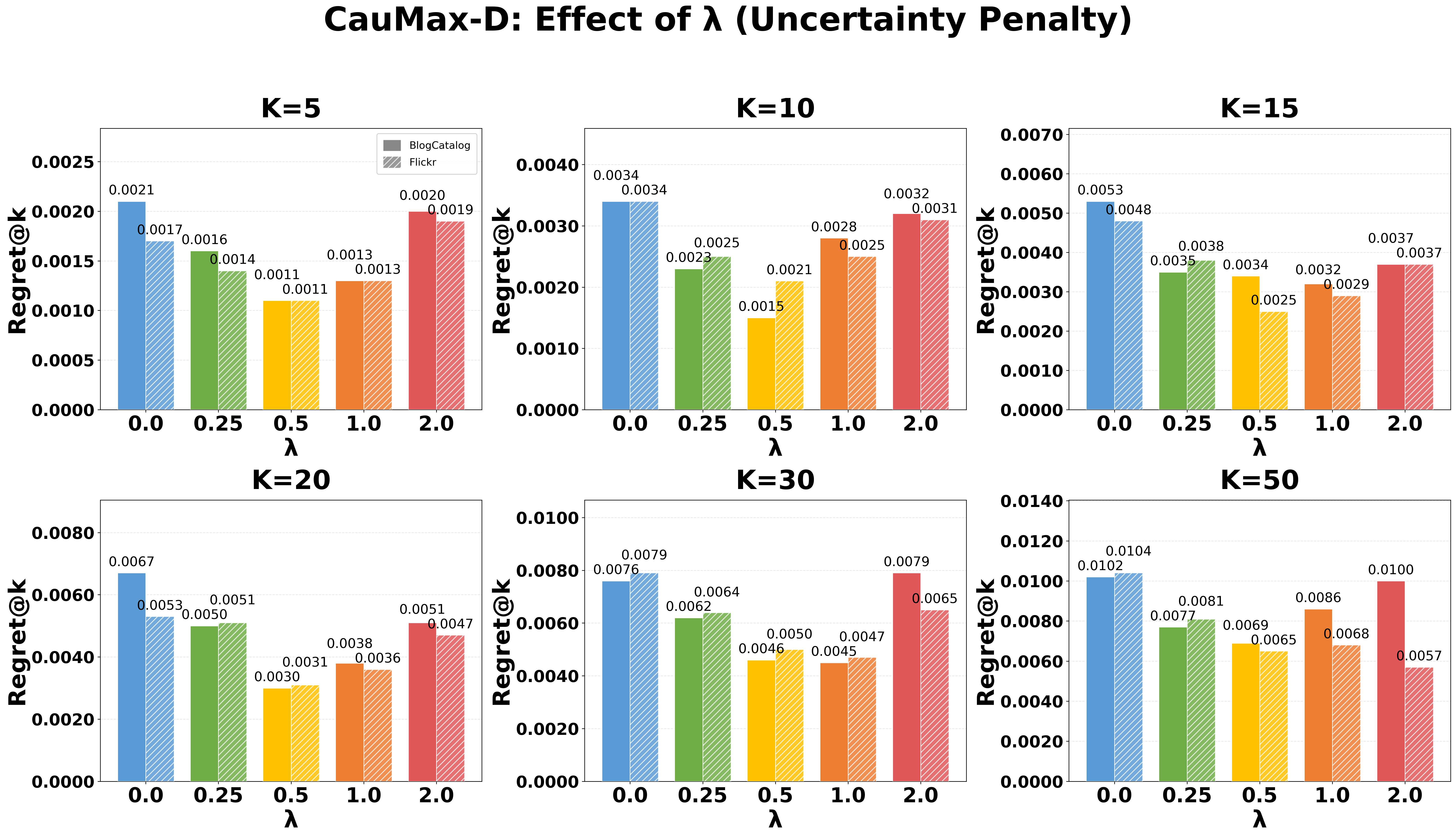}
    \label{fig:lambda_diff}
  \end{subfigure}
  \caption{Sensitivity analysis of the uncertainty penalty $\lambda$ on $\text{Regret@}K$
 across both datasets and all budget levels $K \in \{5,10,15,20,30,50\}$. Solid bars represent BC and hatched bars represent Flickr.}
  \label{fig:lambda_sensitivity}
\end{figure}

\begin{figure}[t]
  \centering
  \Description{Two stacked bar charts illustrating the sensitivity of the RMSE metric to the uncertainty penalty $\lambda$. Similar to the previous figure, the top plot corresponds to the CauMax-G search strategy, while the bottom plot corresponds to the CauMax-D search strategy. Bars compare performance on BlogCatalog and Flickr datasets across varying lambda values.}
  \begin{subfigure}{\linewidth}
    \centering
    \includegraphics[scale=0.13]{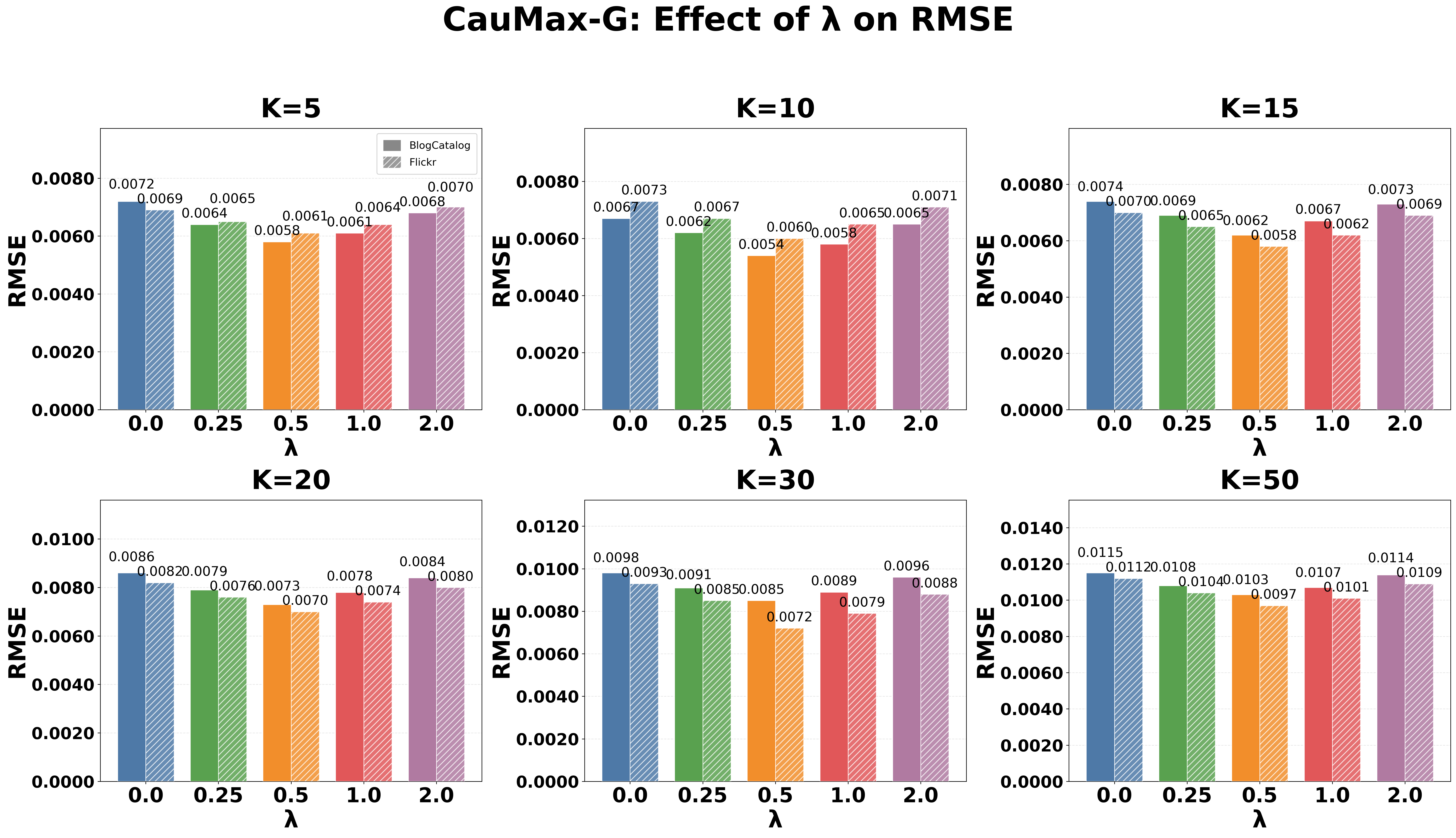}
    \label{fig:lambda_greedy_rmse}
  \end{subfigure}
  \begin{subfigure}{\linewidth}
    \centering
    \includegraphics[scale=0.13]{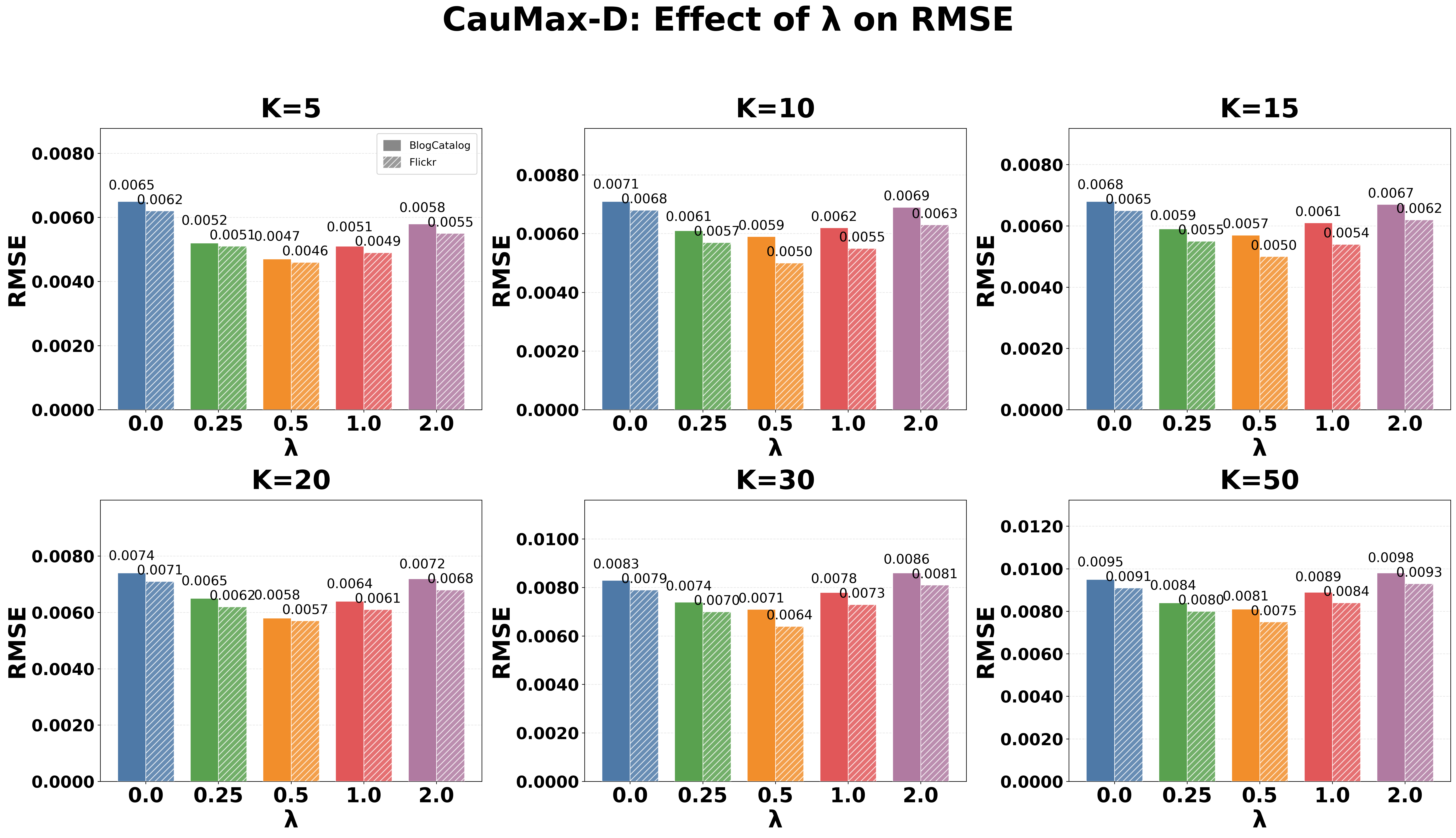}
    \label{fig:lambda_diff_rmse}
  \end{subfigure}
  \caption{Sensitivity analysis of the uncertainty penalty $\lambda$ on RMSE.}
  \label{fig:lambda_sensitivity_rmse}
\end{figure}

\subsection{Main Results}

Table~\ref{tab:main_results} compares all methods on BC and Flickr at $\lambda=0.5$ with the proposed methods. We make the following key observations. \paragraph{The proposed methods substantially outperform all baselines.}
Across all budget levels $K \in \{5,10,15,20,30,50\}$, both proposed methods achieve markedly lower regret than Degree, IM, and Random.
For instance, at $K=20$ on BC, CauMax-D achieves $\text{Regret@}K = 0.0030$, compared with $0.0338$ for Degree, $0.0576$ for IM, and $0.0783$ for Random, an order-of-magnitude improvement.
This confirms that explicitly optimizing a causal estimand leads to more effective cross-group interventions than relying on structural heuristics or diffusion-based objectives.

\paragraph{CauMax-D closely tracks the oracle.} CauMax-D consistently achieves the lowest regret across both datasets and all budget levels, closely approximating the oracle-greedy solution. On Flickr, CauMax-D maintains regret at most 0.0065 even at $K = 50$, whereas CauMax-G reaches 0.0278 and baselines exceed 0.05. The advantage of CauMax-D is particularly pronounced at larger budgets, where the combinatorial search space makes greedy selection more susceptible to local optima.

\paragraph{Estimation accuracy correlates with subset selection quality.}
The RMSE results in Table~\ref{tab:main_results} show a ranking broadly aligned with $\text{Regret@}K$: across most budget levels, CauMax-D achieves the lowest RMSE, followed by CauMax-G, Degree, IM, and Random.
This indicates that better causal effect estimation directly translates to better subset selection.

\subsection{Sensitivity to the Uncertainty Penalty \texorpdfstring{$\lambda$}{lambda}}

We vary $\lambda \in \{0, 0.25, 0.5, 1.0, 2.0\}$, where $\lambda = 0$ reduces to the mean-only baseline, as shown in Figs.~\ref{fig:lambda_sensitivity} and~\ref{fig:lambda_sensitivity_rmse}. Moderate penalization ($\lambda \in [0.25, 0.5]$) consistently reduces regret across both datasets and search strategies; for CauMax-D, $\lambda = 0.5$ reduces regret by up to 42\% relative to $\lambda = 0$. However, excessive penalization ($\lambda = 2.0$) degrades performance by favoring low-variance subsets with smaller predicted effects. CauMax-D exhibits a stable U-shaped response with a clear optimum around $\lambda = 0.5$, while CauMax-G shows a flatter profile, suggesting its iterative selection partially compensates for estimation noise even without explicit uncertainty penalization.

\subsection{Computational Complexity}
\label{sec:complexity}

We analyze the time complexity of the proposed algorithms compared to the baselines, as summarized in Table~\ref{tab:complexity}.
Let $|V_A|$ denote the number of source nodes, $|E|$ the number of edges, $K$ the budget, and $T_{\text{model}}$ the cost of a single forward pass through the estimator $f_\theta$.
For the stochastic methods, $M$ represents the number of Monte Carlo dropout passes, and $R$ denotes the diffusion simulations in IM.

The \textbf{CauMax-G} (Algorithm~\ref{alg:greedy}) is computationally intensive; it requires $K$ iterations, where each iteration evaluates $|V_A| - |S|$ candidates via $M$ stochastic forward passes. This results in a complexity of $O(K \cdot |V_A| \cdot M \cdot T_{\text{model}})$.
In contrast, the \textbf{CauMax-D} (Algorithm~\ref{alg:dgs}) runs for a fixed number of gradient steps $E_{\text{opt}}$, independent of the candidate set size during optimization, yielding $O(E_{\text{opt}} \cdot M \cdot T_{\text{model}})$.
Since $E_{\text{opt}} \ll K \cdot |V_A|$ in practice, the differentiable method achieves a substantial speedup.
Regarding baselines, \textbf{Degree} is highly efficient with $O(|V_A| \log |V_A| + |E|)$ due to sorting, while \textbf{IM} is expensive, scaling with the number of simulations $R$ and graph size.

\begin{table}[t]
\centering
\caption{Comparison of time complexity for subset selection methods.}
\label{tab:complexity}
\begin{tabular}{ll}
\toprule
\textbf{Method} & \textbf{Time Complexity} \\
\midrule
CauMax-G (Alg.~\ref{alg:greedy}) & $O(K \cdot |V_A| \cdot M \cdot T_{\mathrm{model}})$ \\
CauMax-D (Alg.~\ref{alg:dgs}) & $O(E_{\mathrm{opt}} \cdot M \cdot T_{\mathrm{model}})$ \\
Degree & $O(|V_A| \log |V_A| + |E|)$ \\
IM & $O(K \cdot |V_A| \cdot R \cdot |E|)$ \\
Random & $O(K)$ \\
\bottomrule
\end{tabular}
\end{table}

\section{Conclusion}
We introduced the problem of cross-group causal influence maximization, which seeks to identify the subset of source-group units whose intervention yields the greatest causal improvement in outcomes of a distinct target group connected through network interference pathways. We defined the core-to-group causal effect (Co2G) as a causal estimand grounded in the potential outcomes framework, and established its nonparametric identifiability from observational data via do-calculus. Building on this identification result, we proposed CauMax, a framework combining a graph neural network-based estimator that models within-group dependencies and cross-group spillover propagation with an uncertainty-aware optimization objective, instantiated through two scalable algorithms: CauMax-G (greedy search) and CauMax-D (differentiable optimization). Experiments on social network datasets confirmed that CauMax yields substantially lower regret than structural heuristics and diffusion-based baselines.
\paragraph{Limitations and Future Work}
Our identification and estimation framework relies on the causal sufficiency assumption (Assumption~\ref{ass:sufficiency}), which requires that all common causes of source-group treatments and target-group outcomes are observed. In practice, unmeasured confounders such as latent homophily or unobserved socioeconomic factors may bias the estimated Co2G. Addressing hidden confounding is not the focus of this work, but it constitutes an important direction for future research. Potential approaches include incorporating proxy variables, instrumental variable methods, or sensitivity analysis to bound the impact of unobserved confounders on the estimated cross-group causal effects.

\bibliographystyle{ACM-Reference-Format}
\bibliography{sample-base}

\end{document}